\documentclass[10pt]{article}
\usepackage[a4paper, total={6in, 8in}]{geometry}
\usepackage[usenames,dvipsnames,svgnames,table]{xcolor} % usenames : 16 basic colors,  dvipsnames : other 68 colors, svgnames : another 150 colors. It is used instead of \usepackage{color}. Can find descriptions for color in xcolor package. 

\usepackage{algorithm}
\usepackage{algorithmic}
\floatname{algorithm}{Algorithm}

\usepackage{longtable}
\usepackage{hhline}
\usepackage{float}
\newfloat{algorithm}{tb}{lop}

\usepackage{enumitem} % for \begin{description}
\usepackage{rotfloat} % For [H] option in sidewaystable
\usepackage{graphicx}
\usepackage{epsfig}
\usepackage{amssymb, amsmath, amsthm}
\usepackage{tikz}
\usetikzlibrary{3d, calc}
\usepackage{verbatim}
\usepackage{hyperref}
\usepackage{natbib}
\usepackage{authblk}
\usepackage{multirow}
\usepackage{subcaption} % for \begin{subtable}
\usepackage{array}
\newcolumntype{L}{>{\centering\arraybackslash}m{0.1\linewidth}}

\newtheorem{theorem}{Theorem}[section]
\newtheorem{lemma}[theorem]{Lemma}
\newtheorem{proposition}[theorem]{Proposition}

\DeclareMathOperator{\varop}{var}
\DeclareMathOperator{\covop}{cov}

\newcommand{\R}{\mathbb{R}}

\DeclareMathOperator{\tr}{tr}

\newcommand{\Ptwo}{\mathcal{P}_2(\R^d)}
\newcommand{\W}{W_2}
\newcommand{\iid}{\text{i.i.d.}}

\begin{document}
	
\title{On Heterogeneity in Wasserstein Space}
\author[1]{Kisung You\footnote{\url{kisung.you@baruch.cuny.edu}}}
\affil[1]{Department of Mathematics, Baruch College, City University of New York}
\date{}

\maketitle
\begin{abstract}
Data represented by probability measures arise as empirical distributions, posterior distributions, and feature-based representations of complex objects. We study heterogeneity in a population of probability measures through the expected value of a chosen transform of the pairwise Wasserstein distance. The resulting estimator is unbiased and, under simple moment conditions on the population law, is strongly consistent, asymptotically normal, and equipped with a consistent standard error. This also yields a simple comparison of two populations and remains stable under plug-in approximation when the measures are estimated. The associated empirical eccentricities identify the observations that contribute most strongly to heterogeneity within a sample.
\end{abstract}

%%%%%%%%%%%%%%%%%%%%%%%%%%%%%%%%%%%%%%%%%%%%%%%%%%%%%%%%%%%%%%%%%%%%%%%%%%%%%%%%%

\section{Introduction}

Statistical observations are increasingly represented by probability measures rather than vectors. Examples include empirical distributions built from repeated observations within a subject, posterior distributions attached to individual units, and measures extracted from images, texts, or point clouds through an embedding step. In such settings, heterogeneity is not a property of a single measure, but a property of the population law that governs the random measures.

Let $\Ptwo$ denote the set of Borel probability measures on $\R^d$ with finite second moment, equipped with the quadratic Wasserstein distance $\W$ \citep{villani_2003_TopicsOptimalTransportation,villani_2009_OptimalTransportOld}. If $\Pi$ is a Borel probability law on $\Ptwo$ and $\psi\colon [0,\infty)\to\R$ is measurable, define
\begin{equation}
D_{\psi}(\Pi)=E\bigl[\psi\{\W(\mu,\nu)\}\bigr],
\label{eq:Dpsi}
\end{equation}
where $\mu,\nu\stackrel{\iid}{\sim}\Pi$. We call $D_{\psi}(\Pi)$ the pairwise Wasserstein heterogeneity of $\Pi$. When $\psi(t)=t^2$, the functional measures mean squared separation under the Wasserstein geometry. When $\psi$ is bounded, it yields a more robust summary that downweights unusually distant pairs.

The appeal of \eqref{eq:Dpsi} is its simplicity. It depends only on pairwise Wasserstein distances and therefore does not require a Wasserstein barycenter or any other centring object that are expensive to attain \citep{agueh_2011_BarycentersWassersteinSpace, you_2025_WassersteinMedianProbability}. The same framework also accommodates different scientific goals through the choice of $\psi$. One may emphasize overall spread with a distance-like transform or suppress rare extreme pairs with a bounded one.

Given independent draws $\mu_1,\ldots,\mu_n\sim\Pi$, the natural estimator of \eqref{eq:Dpsi} is the order-two $U$--statistic
\begin{equation}
U_n=\binom{n}{2}^{-1}\sum_{1\leq i<j\leq n} h(\mu_i,\mu_j),\quad
h(\mu,\nu)=\psi\{\W(\mu,\nu)\}.
\label{eq:Un}
\end{equation}
If $E|h(\mu,\nu)|<\infty$, then $U_n$ is unbiased for $D_{\psi}(\Pi)$ by standard $U$--statistic theory \citep{hoeffding_1948_ClassStatisticsAsymptotically,serfling_1980_ApproximationTheoremsMathematical}. The practical questions are how to express this integrability in terms of transparent assumptions on $\Pi$, and how to translate the resulting large-sample theory into an inferential procedure that is easy to report and interpret.

This paper addresses both questions. Under mild Wasserstein moment conditions, the estimator in \eqref{eq:Un} is strongly consistent and asymptotically normal, and its asymptotic variance is consistently estimated by the sample variance of observation-level eccentricities. Those eccentricities are useful not only for standard errors. They also indicate which individual measures are unusually far from the sample, so they turn the heterogeneity summary into an interpretable diagnostic.

\section{Theory and inference}

We first relate integrability of the kernel $h$ to moment conditions on the population law.

\begin{lemma}[Moment domination]\label{lem:domination}
Suppose that, for some $p\geq 1$ and constants $a,b\geq 0$,
\[
|\psi(t)|\leq a+b t^p\quad (t\geq 0),
\]
and fix $\mu_0\in\Ptwo$. Then there exists $C\geq 0$ such that, for all $\mu,\nu\in\Ptwo$,
\[
|h(\mu,\nu)|\leq C\bigl[1+\W(\mu,\mu_0)^p+\W(\nu,\mu_0)^p\bigr].
\]
Consequently, if $E\{\W(\mu,\mu_0)^p\}<\infty$, then $E|h(\mu,\nu)|<\infty$; if $E\{\W(\mu,\mu_0)^{2p}\}<\infty$, then $E\{h(\mu,\nu)^2\}<\infty$. These moment conditions do not depend on the particular choice of $\mu_0$.
\end{lemma}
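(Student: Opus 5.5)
The argument needs only the triangle inequality for $\W$ on $\Ptwo$ and the elementary convexity bound $(x+y)^p\le 2^{p-1}(x^p+y^p)$ for $x,y\ge 0$ and $p\ge 1$. First, for any $\mu,\nu\in\Ptwo$ we have $\W(\mu,\nu)\le \W(\mu,\mu_0)+\W(\nu,\mu_0)$. Since $t\mapsto t^p$ is nondecreasing on $[0,\infty)$, the growth condition on $\psi$ gives
\[
|h(\mu,\nu)|\le a+b\,\W(\mu,\nu)^p\le a+b\,2^{p-1}\bigl[\W(\mu,\mu_0)^p+\W(\nu,\mu_0)^p\bigr].
\]
Hence the pointwise bound holds with $C=\max\{a,\,b\,2^{p-1}\}$. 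Measurability of $h$ is inherited from continuity of $\W$ and measurability of $\psi$, so the expectations below make sense.

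For the first integrability claim, take expectations with $\mu,\nu$ i.i.d.\ from $\Pi$. This gives $E|h(\mu,\nu)|\le C[1+2E\{\W(\mu,\mu_0)^p\}]<\infty$. For the second claim, square the pointwise bound and use $(x+y+z)^2\le 3(x^2+y^2+z^2)$. This yields $h(\mu,\nu)^2\le 3C^2[1+\W(\mu,\mu_0)^{2p}+\W(\nu,\mu_0)^{2p}]$, and taking expectations gives finiteness under the $2p$-th moment assumption.

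For independence from the choice of base point, let $\mu_0'\in\Ptwo$ be another reference measure and set $r=\W(\mu_0,\mu_0')$. This is finite because both measures have finite second moments; for instance, $\W(\mu_0,\mu_0')\le \W(\mu_0,\delta_0)+\W(\delta_0,\mu_0')$, and each term is the square root of a second moment. By the triangle inequality and the same convexity bound,
\[
\W(\mu,\mu_0')^q\le 2^{q-1}\bigl[\W(\mu,\mu_0)^q+r^q\bigr]
\]
for $q\in\{p,2p\}$. So finiteness of $E\{\W(\mu,\mu_0)^q\}$ implies finiteness of $E\{\W(\mu,\mu_0')^q\}$, and the roles of $\mu_0$ and $\mu_0'$ are symmetric.

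There is no real obstacle here. The only point to flag is that $r<\infty$ requires $\mu_0,\mu_0'\in\Ptwo$, and that is exactly where the finite-second-moment assumption enters.
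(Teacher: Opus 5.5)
Your proof is correct and follows the paper's argument essentially step for step: the triangle inequality with $(x+y)^p\leq 2^{p-1}(x^p+y^p)$ for the pointwise bound, $(x+y+z)^2\leq 3(x^2+y^2+z^2)$ for square integrability, and a second triangle inequality through $\W(\mu_0,\mu_0')<\infty$ for base-point independence. The differences are minor: you take $C=\max\{a,b\,2^{p-1}\}$ where the paper takes $C=a+b\,2^{p-1}$, and your restriction to $q\in\{p,2p\}$ is slightly more careful than the paper's ``$q>0$'', since the $2^{q-1}$ bound requires $q\geq 1$.
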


Lemma~\ref{lem:domination} separates the role of the transform from that of the population law. Once $\psi$ grows no faster than a polynomial, integrability of the kernel reduces to an ordinary Wasserstein moment condition on $\Pi$. Bounded transforms are especially forgiving, because then no moment assumption is needed at all.

Write
\begin{equation}
g(\mu)=E\{h(\mu,\mu_1)\}-D_{\psi}(\Pi),
\label{eq:g}
\end{equation}
where $\mu_1\sim\Pi$ is independent of $\mu$. The function $g$ is the first Hoeffding projection of the kernel \citep{hoeffding_1948_ClassStatisticsAsymptotically}. It compares the expected transformed distance from $\mu$ to a typical draw with the overall population average. Measures with $g(\mu)>0$ are more isolated than a typical observation on average, whereas measures with $g(\mu)<0$ are more central.

\begin{theorem}\label{thm:main}
Assume the growth condition of Lemma~\ref{lem:domination}.

\noindent (i) If $E\{\W(\mu,\mu_0)^p\}<\infty$ for some $\mu_0\in\Ptwo$, then $U_n\to D_{\psi}(\Pi)$ almost surely.

\noindent (ii) If $E\{\W(\mu,\mu_0)^{2p}\}<\infty$ for some $\mu_0\in\Ptwo$ and $\varop\{g(\mu_1)\}>0$, then
\[
n^{1/2}\{U_n-D_{\psi}(\Pi)\}\to \mathcal{N}\bigl[0,4\varop\{g(\mu_1)\}\bigr]
\]
in distribution.
\end{theorem}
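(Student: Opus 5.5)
The plan is to reduce both parts to classical $U$--statistic limit theorems once the required integrability of $h$ is in hand. Lemma~\ref{lem:domination} supplies that integrability. Before anything else, I would check that $h$ is a legitimate kernel: it should be a symmetric, jointly measurable function on $\Ptwo\times\Ptwo$. Symmetry holds because $\W$ is a metric. For measurability, note that $\W$ is continuous on $\Ptwo\times\Ptwo$ (it satisfies the triangle inequality with respect to the $\W$ topology), hence Borel, and $\psi$ is measurable. Since $\Ptwo$ with $\W$ is a complete separable metric space, the product Borel $\sigma$-field is the Borel $\sigma$-field of the product. So $h(\mu_i,\mu_j)$ is a well-defined random variable, and the $\mu_i$ are i.i.d.\ random elements of a Polish space. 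This is exactly the setting of Hoeffding's theory.

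For (i), Lemma~\ref{lem:domination} shows that $E\{\W(\mu,\mu_0)^p\}<\infty$ implies $E|h(\mu_1,\mu_2)|<\infty$. I would then invoke Hoeffding's strong law for $U$--statistics \citep{hoeffding_1948_ClassStatisticsAsymptotically,serfling_1980_ApproximationTheoremsMathematical}. It states that for i.i.d.\ observations and a symmetric kernel with finite first absolute moment, $U_n\to E\,h(\mu_1,\mu_2)=D_{\psi}(\Pi)$ almost surely. The standard proof is through the reverse martingale property of $U_n$ with respect to the exchangeable filtration, combined with the Hewitt--Savage zero--one law. This proof uses nothing about the state space beyond its being a standard Borel space, so it applies verbatim here.

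For (ii), the $2p$-moment condition gives $E\{h(\mu_1,\mu_2)^2\}<\infty$ by Lemma~\ref{lem:domination}. By conditional Jensen it also gives $E\{g(\mu_1)^2\}<\infty$. I would then use the Hoeffding decomposition
\[
U_n-D_{\psi}(\Pi)=\frac{2}{n}\sum_{i=1}^n g(\mu_i)+R_n,
\]
where $R_n=\binom{n}{2}^{-1}\sum_{i<j}h_2(\mu_i,\mu_j)$ with the degenerate kernel
\[
h_2(\mu,\nu)=h(\mu,\nu)-g(\mu)-g(\nu)-D_{\psi}(\Pi).
\]
The terms $h_2(\mu_i,\mu_j)$ are mean zero and pairwise uncorrelated across distinct pairs. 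Hence
\[
E(R_n^2)=\binom{n}{2}^{-1}E\{h_2(\mu_1,\mu_2)^2\}=O(n^{-2}),
\]
so $n^{1/2}R_n\to 0$ in probability. The leading term is $2n^{-1/2}\sum_{i=1}^n g(\mu_i)$, a normalized sum of i.i.d.\ mean-zero variables with finite positive variance. The classical central limit theorem gives its convergence to $\mathcal N[0,4\varop\{g(\mu_1)\}]$, and Slutsky's lemma finishes the argument. The hypothesis $\varop\{g(\mu_1)\}>0$ only ensures the limit is nondegenerate; otherwise the statement holds trivially with a point mass.

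I expect no serious obstacle, since the analytic content sits in Lemma~\ref{lem:domination}. The one point needing care is the measure-theoretic bookkeeping: $g$ must be well defined and measurable, which follows from Fubini given $E|h|<\infty$. One must also confirm that the classical theorems, usually stated for $\R^k$-valued data, transfer to observations in the Polish space $(\Ptwo,\W)$. Both the reverse-martingale proof and the variance computation are dimension-free, so this transfer is straightforward.
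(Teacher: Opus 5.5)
Your proposal is correct and follows the paper's proof essentially step for step: Lemma~\ref{lem:domination} supplies integrability, Hoeffding's strong law gives (i), and (ii) comes from the Hoeffding decomposition with the bound $E(R_n^2)=\binom{n}{2}^{-1}E\{h_2(\mu_1,\mu_2)^2\}$ on the degenerate remainder, the classical central limit theorem for $\sum_i g(\mu_i)$, and Slutsky's theorem. You add two things the paper leaves implicit, the measurability and Polish-space bookkeeping and the remark on the case $\varop\{g(\mu_1)\}=0$. Conversely, you only assert that the $h_2(\mu_i,\mu_j)$ are uncorrelated across distinct pairs, whereas the paper verifies this by conditioning on the shared index.
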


Part (i) requires only integrability of the kernel, whereas part (ii) requires square integrability. This is the natural split since consistency depends on a first moment, while the Gaussian approximation depends on a second moment. If $\varop\{g(\mu_1)\}=0$, the kernel is first-order degenerate and the normal approximation does not capture the leading behaviour.

The projection in \eqref{eq:g} is not directly observable, but it has a natural sample analogue,
\begin{equation}
\widehat g_i=(n-1)^{-1}\sum_{j\neq i} h(\mu_i,\mu_j)-U_n\quad (i=1,\ldots,n).
\label{eq:ghat}
\end{equation}
We call $\widehat g_i$ the empirical eccentricity of $\mu_i$. It is the average transformed distance from $\mu_i$ to the rest of the sample, recentered by the sample-wide average. Large positive and negative values indicate unusually isolated measures and ones that are relatively representative of the sample cloud, respectively.

Because $\sum_{i=1}^n \widehat g_i=0$, the empirical eccentricities yield the following variance estimator
\begin{equation}
\widehat\sigma_n^2=4(n-1)^{-1}\sum_{i=1}^n \widehat g_i^2.
\label{eq:sighat}
\end{equation}

\begin{proposition}[Variance estimation]\label{prop:variance}
Under the assumptions of Theorem~\ref{thm:main}(ii), $\widehat\sigma_n^2\to 4\varop\{g(\mu_1)\}$ in probability. Hence
\[
\widehat\sigma_n^{-1} n^{1/2}\{U_n-D_{\psi}(\Pi)\}\to \mathcal{N}(0,1)
\]
in distribution.
\end{proposition}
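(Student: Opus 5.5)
The plan is to show that $\widehat g_i$ is uniformly close, in mean square, to the unobservable projection $g(\mu_i)$. Then $\widehat\sigma_n^2$ is essentially $4n^{-1}\sum_i g(\mu_i)^2$, which converges by the strong law of large numbers. The studentized statement follows from Theorem~\ref{thm:main}(ii) and Slutsky's lemma, since the limit $4\varop\{g(\mu_1)\}$ is strictly positive.

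Write $\theta=D_{\psi}(\Pi)$ and $h_1(\mu)=g(\mu)+\theta=E\{h(\mu,\mu_1)\}$. For each $i$ decompose
\[
\widehat g_i-g(\mu_i)=A_i-(U_n-\theta),\qquad A_i=(n-1)^{-1}\sum_{j\neq i}\{h(\mu_i,\mu_j)-h_1(\mu_i)\}.
\]
By Lemma~\ref{lem:domination}, $E\{h(\mu,\nu)^2\}<\infty$, so $g(\mu_1)$ is square integrable. The strong law then gives $n^{-1}\sum_i g(\mu_i)^2\to\varop\{g(\mu_1)\}$ almost surely, using $E\{g(\mu_1)\}=0$. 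Since $(n-1)/n\to 1$, it suffices to show
\[
n^{-1}\sum_{i=1}^n\{\widehat g_i-g(\mu_i)\}^2\to 0
\]
in probability. The cross term with $g(\mu_i)$ is then handled by Cauchy--Schwarz.

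Next I bound the error term. We have $(U_n-\theta)^2\to 0$ almost surely by Theorem~\ref{thm:main}(i), whose hypothesis follows from the $2p$-moment condition. For the main term I compute $E(A_i^2)$ by conditioning on $\mu_i$. Given $\mu_i$, the summands $h(\mu_i,\mu_j)-h_1(\mu_i)$, $j\neq i$, are independent and have conditional mean zero, so
\[
E(A_i^2)=(n-1)^{-1}E\bigl[\{h(\mu_1,\mu_2)-h_1(\mu_1)\}^2\bigr]\leq (n-1)^{-1}\cdot 4E\{h(\mu_1,\mu_2)^2\}.
\]
Hence $E\bigl(n^{-1}\sum_i A_i^2\bigr)=O(n^{-1})\to 0$, and Markov's inequality gives convergence in probability. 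Combining this with $(a-b)^2\le 2a^2+2b^2$ yields the required claim.

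The main point needing care is the conditional orthogonality step. It uses only square integrability of $h$, which Lemma~\ref{lem:domination} supplies under the $2p$-moment assumption, so I expect no real obstacle. One simplification is that no uniform-in-$i$ control is needed, because we average the squared errors rather than take a maximum. Finally, $\widehat\sigma_n^2\to 4\varop\{g(\mu_1)\}>0$ in probability implies $\widehat\sigma_n^{-1}\to\{4\varop g(\mu_1)\}^{-1/2}$ in probability. Slutsky's lemma applied to Theorem~\ref{thm:main}(ii) then gives the standard normal limit.
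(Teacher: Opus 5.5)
Your proof is correct. It follows the same overall plan as the paper: approximate the eccentricities by the projections in average squared error, apply the strong law to $n^{-1}\sum_i g(\mu_i)^2$, then finish with Cauchy--Schwarz and Slutsky. The decomposition, however, is different.

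The paper uses the full Hoeffding decomposition $h_{ij}=\theta+g_i+g_j+\widetilde h_{ij}$ and writes $\widehat g_i=\frac{n-2}{n-1}(g_i-\bar g_n)+a_{i,n}-U_n^{\circ}$, where $a_{i,n}$ is the row average of the doubly centred kernel $\widetilde h_{ij}$. It then bounds both remainder pieces in $L^2$, reusing $E\{(U_n^{\circ})^2\}=2E(\widetilde h_{12}^2)/\{n(n-1)\}$ from the proof of Theorem~\ref{thm:main}. The comparison is with the centred, shrunk projections $\frac{n-2}{n-1}(g_i-\bar g_n)$ rather than with $g_i$ itself.

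You instead compare $\widehat g_i$ directly with $g(\mu_i)$ and centre each summand only at $h_1(\mu_i)$, so your $A_i$ absorbs the terms $g(\mu_j)/(n-1)$. This needs only one-sided conditional centring: the summands are conditionally independent and mean zero given $\mu_i$. You do not need the full degeneracy $E(\widetilde h_{12}\mid\mu_k)=0$ for both $k$. You also avoid tracking $\bar g_n$ and the factor $(n-2)/(n-1)$, since the global term $U_n-\theta$ is handled by the strong law in Theorem~\ref{thm:main}(i).

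What the paper's route buys is that the whole remainder argument consists of explicit second-moment computations, giving $E(A_n)=O(n^{-1})$ without appealing to the $U$-statistic strong law. Yours is shorter and uses weaker structure. If a rate were wanted, replacing Theorem~\ref{thm:main}(i) by $E\{(U_n-\theta)^2\}=O(n^{-1})$ or Theorem~\ref{thm:main}(ii) would give one.
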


Accordingly, a large-sample $(1-\alpha)$ interval for $D_{\psi}(\Pi)$ is
\[
U_n\pm z_{1-\alpha/2}\,\widehat\sigma_n n^{-1/2},
\]
and the null hypothesis $H_0\colon D_{\psi}(\Pi)=D_0$ may be tested with $\widehat\sigma_n^{-1}n^{1/2}(U_n-D_0)$. The same eccentricities also provide a compact diagnostic. A long positive tail in the sorted values of $\widehat g_i$ indicates that a few unusually isolated measures are driving the overall heterogeneity.

The same construction compares two populations of random measures. Let $\Pi_A$ and $\Pi_B$ be laws on $\Ptwo$, with independent samples of sizes $n_A$ and $n_B$, and let $U_A$ and $U_B$ be the corresponding within-group statistics.

\begin{proposition}[Two-sample comparison]\label{prop:twosample}
Assume that the conditions of Theorem~\ref{thm:main}(ii) hold for both groups, with non-zero projection variances. If $\Delta=D_{\psi}(\Pi_A)-D_{\psi}(\Pi_B)$, then $U_A-U_B\to \Delta$ almost surely and
\[
\frac{(U_A-U_B)-\Delta}{(\widehat\sigma_A^2/n_A+\widehat\sigma_B^2/n_B)^{1/2}}
\to \mathcal{N}(0,1)
\]
in distribution, where $\widehat\sigma_A^2$ and $\widehat\sigma_B^2$ are computed from \eqref{eq:sighat} within each group.
\end{proposition}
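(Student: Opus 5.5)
The plan is to reduce Proposition~\ref{prop:twosample} to the one-sample results, Theorem~\ref{thm:main} and Proposition~\ref{prop:variance}, applied separately within each group, and then use the independence of the two samples. I take the asymptotic regime to be $n_A,n_B\to\infty$. For the Gaussian limit I will also need the group sizes to be comparable, in the sense that $\lambda_n=n_A/(n_A+n_B)$ stays bounded away from $0$ and $1$, at least along subsequences. The Studentized statistic turns out to be insensitive to the limit of $\lambda_n$, so the argument can be run along subsequences on which $\lambda_n\to\lambda\in[0,1]$.

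The almost sure part is immediate. Theorem~\ref{thm:main}(i) applies in each group, since the $2p$-th moment condition implies the $p$-th moment condition by Jensen's inequality. It gives $U_A\to D_\psi(\Pi_A)$ and $U_B\to D_\psi(\Pi_B)$ almost surely, so $U_A-U_B\to\Delta$ almost surely.

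For the distributional part, write $\sigma_A^2=4\varop\{g_A(\mu)\}$ and $\sigma_B^2=4\varop\{g_B(\nu)\}$. Both are strictly positive by assumption. Set $s_n^2=\sigma_A^2/n_A+\sigma_B^2/n_B$ and $T_A=n_A^{1/2}\{U_A-D_\psi(\Pi_A)\}/\sigma_A$, with $T_B$ defined analogously. Theorem~\ref{thm:main}(ii) gives $T_A\to\mathcal{N}(0,1)$ and $T_B\to\mathcal{N}(0,1)$. Since the samples are independent, $T_A$ and $T_B$ are independent, so $(T_A,T_B)$ converges jointly to a standard bivariate normal. This follows by factorizing the characteristic functions. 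Now
\[
\frac{(U_A-U_B)-\Delta}{s_n}=w_{A,n}T_A-w_{B,n}T_B,
\qquad
w_{A,n}=\frac{\sigma_A n_A^{-1/2}}{s_n},\quad w_{B,n}=\frac{\sigma_B n_B^{-1/2}}{s_n},
\]
with deterministic weights satisfying $w_{A,n}^2+w_{B,n}^2=1$.

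Passing to a subsequence on which $w_{A,n}\to w_A$, the continuous mapping theorem shows that the display converges to $w_AZ_1-(1-w_A^2)^{1/2}Z_2\sim\mathcal{N}(0,1)$. Because the limit is the same along every subsequence, the full sequence converges. Note that this step does not actually need the group sizes to be comparable.

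Finally, replace $s_n$ by the estimated scale. By Proposition~\ref{prop:variance}, $\widehat\sigma_A^2\to\sigma_A^2$ and $\widehat\sigma_B^2\to\sigma_B^2$ in probability. The ratio
\[
\frac{\widehat\sigma_A^2/n_A+\widehat\sigma_B^2/n_B}{s_n^2}=w_{A,n}^2\frac{\widehat\sigma_A^2}{\sigma_A^2}+w_{B,n}^2\frac{\widehat\sigma_B^2}{\sigma_B^2}
\]
is a convex combination of two quantities that each tend to $1$ in probability, so it tends to $1$ in probability. This is where positivity of both projection variances is used. Slutsky's lemma then gives the stated standard normal limit.

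The main obstacle is ensuring the estimated denominator is uniformly close to the true one relative to its size, regardless of how $n_A/n_B$ behaves. The convex-combination representation above handles this, so the main care point is simply stating the regime $\min(n_A,n_B)\to\infty$ explicitly.
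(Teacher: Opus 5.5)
Your proposal is correct and follows the paper's proof essentially step for step. Both arguments apply Theorem~\ref{thm:main}(i) in each group for the almost-sure part, get joint convergence of the independent standardized statistics by factorizing characteristic functions, use a subsequence argument on the deterministic unit-norm weights, and write $\widehat d_n^2/d_n^2$ as a convex combination of $\widehat\sigma_A^2/\sigma_A^2$ and $\widehat\sigma_B^2/\sigma_B^2$ before applying Slutsky. Your extra remarks are small clarifications the paper leaves implicit: the Jensen step from the $2p$-th to the $p$-th moment, and the explicit regime $\min(n_A,n_B)\to\infty$ with no comparability of group sizes needed.
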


This yields an immediate test of equal heterogeneity across two populations. In particular, $H_0\colon D_{\psi}(\Pi_A)=D_{\psi}(\Pi_B)$ is rejected for large values of
\[
\frac{|U_A-U_B|}{(\widehat\sigma_A^2/n_A+\widehat\sigma_B^2/n_B)^{1/2}}.
\]
The test is useful when the scientific question is comparative rather than absolute, for example when one wishes to determine whether one class of random measures is more variable than another under the same transform.

In practice, the measures $\mu_i$ are often themselves estimated. The next result shows that preliminary estimation affects the heterogeneity statistic only through an average Wasserstein approximation error, provided that $\psi$ is Lipschitz.

\begin{proposition}[Plug-in stability]\label{prop:plugin}
Assume that $\psi$ is $L$-Lipschitz. Let $\widehat U_n$ be obtained from \eqref{eq:Un} by replacing $\mu_i$ with approximations $\widehat\mu_i$. Then
\[
|\widehat U_n-U_n|\leq 2Ln^{-1}\sum_{i=1}^n \W(\widehat\mu_i,\mu_i).
\]
\end{proposition}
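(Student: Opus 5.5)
The bound is deterministic, so I will argue pairwise and then average. Write $\widehat U_n=\binom{n}{2}^{-1}\sum_{i<j}\psi\{\W(\widehat\mu_i,\widehat\mu_j)\}$. Then
\[
|\widehat U_n-U_n|\leq \binom{n}{2}^{-1}\sum_{1\leq i<j\leq n}\bigl|\psi\{\W(\widehat\mu_i,\widehat\mu_j)\}-\psi\{\W(\mu_i,\mu_j)\}\bigr|.
\]
Since $\psi$ is $L$-Lipschitz on $[0,\infty)$, each summand is at most $L\,|\W(\widehat\mu_i,\widehat\mu_j)-\W(\mu_i,\mu_j)|$. I will assume the $\widehat\mu_i$ lie in $\Ptwo$, so that all distances are finite.

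The key step is the quadrilateral inequality for the metric $\W$ on $\Ptwo$. By the triangle inequality applied twice,
\[
\W(\widehat\mu_i,\widehat\mu_j)\leq \W(\widehat\mu_i,\mu_i)+\W(\mu_i,\mu_j)+\W(\mu_j,\widehat\mu_j).
\]
Exchanging the roles of the hatted and unhatted measures gives the reverse bound. Together these yield
\[
|\W(\widehat\mu_i,\widehat\mu_j)-\W(\mu_i,\mu_j)|\leq \epsilon_i+\epsilon_j,\qquad \epsilon_i=\W(\widehat\mu_i,\mu_i).
\]

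Finally I count. In $\sum_{i<j}(\epsilon_i+\epsilon_j)$, each index $k$ appears in exactly $n-1$ unordered pairs, so the sum equals $(n-1)\sum_k\epsilon_k$. Multiplying by $\binom{n}{2}^{-1}=2/\{n(n-1)\}$ gives $2n^{-1}\sum_k\epsilon_k$, and including the factor $L$ gives exactly the stated bound $2Ln^{-1}\sum_{i=1}^n\W(\widehat\mu_i,\mu_i)$.

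There is no real obstacle here. The only points that need care are the counting constant and confirming that $\W$ is a genuine metric on $\Ptwo$, which is standard.
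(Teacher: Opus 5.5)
Your proof is correct and follows essentially the same route as the paper's: the Lipschitz bound, then the reverse triangle (quadrilateral) inequality for $\W$ to get the pairwise error $\W(\widehat\mu_i,\mu_i)+\W(\widehat\mu_j,\mu_j)$, then the observation that each index appears in exactly $n-1$ pairs. Your explicit requirement that the $\widehat\mu_i$ lie in $\Ptwo$, so that all distances are finite, is a small point the paper leaves implicit.
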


Proposition~\ref{prop:plugin} shows that first-stage estimation enters only through an average Wasserstein error term. In particular, if $n^{-1}\sum_{i=1}^n \W(\widehat\mu_i,\mu_i)=o_p(n^{-1/2})$, then replacing $U_n$ by $\widehat U_n$ does not alter the first-order inference. Combined with available rates for empirical measures in Wasserstein distance \citep{fournier_2015_RateConvergenceWasserstein}, this justifies replacing idealized unit-level measures by empirical approximations.

\section{Interpretation in simple models}

The functional in \eqref{eq:Dpsi} becomes readily transparent in simple families. The next examples show how pairwise Wasserstein heterogeneity reduces to familiar quantities and how the empirical eccentricities should be interpreted.

\smallskip\noindent\textit{Random translations.}
Let $\mu_{\theta}=T_{\theta}\#\mu_0$, where $T_{\theta}(x)=x+\theta$ and $\mu_0\in\Ptwo$ is fixed. Every measure in the family is the same template shifted by a vector $\theta$, so translation equivariance of $\W$ gives
\[
\W(\mu_{\theta},\mu_{\theta'})=\|\theta-\theta'\|.
\]
If $\mu=\mu_{\Theta}$ for an $\R^d$-valued random shift $\Theta$, then heterogeneity of the random measures is exactly heterogeneity of the latent shifts
\[
D_{\psi}(\Pi)=E\bigl[\psi(\|\Theta-\Theta'\|)\bigr],
\]
where $\Theta$ and $\Theta'$ are independent copies. For $\psi(t)=t^2$, one can compute this quantity explicitly. Since
\[
\Theta-\Theta'=(\Theta-E\Theta)-(\Theta'-E\Theta'),
\]
expanding the squared norm yields
\[
E\|\Theta-\Theta'\|^2=E\|\Theta-E\Theta\|^2+E\|\Theta'-E\Theta'\|^2-2E\bigl[(\Theta-E\Theta)^\top (\Theta'-E\Theta')\bigr].
\]
The cross term vanishes by independence, and the first two terms are equal. Therefore
\[
E\|\Theta-\Theta'\|^2=2E\|\Theta-E\Theta\|^2=2\tr\{\covop(\Theta)\}.
\]
Thus $D_{t^2}(\Pi)$ is twice the total variance of the shift variable. In this model, a measure has large empirical eccentricity when its shift is unusually far from the centre of the latent location distribution.

\smallskip\noindent\textit{Posterior-induced Gaussian measures.}
Consider a conjugate model in which $Y_i\mid\theta\sim \mathcal{N}(\theta,1)$ independently and the prior is $\theta\sim \mathcal{N}(0,\tau_0^2)$. After observing $Y=(Y_1,\ldots,Y_m)$, the posterior is $\theta\mid Y\sim \mathcal{N}(\theta_m,\tau_m^2)$ for the usual posterior mean $\theta_m$ and posterior variance $\tau_m^2$,
\[
\tau_m^2=(\tau_0^{-2}+m)^{-1},\quad
\theta_m=\tau_m^2 m\bar Y .
\]
Each posterior draw $\theta$ induces the Gaussian measure $\mathcal{N}(\theta,1)$, so posterior uncertainty in $\theta$ becomes a population of probability measures.

For Gaussian measures with equal covariance matrices, the squared Wasserstein distance is the squared Euclidean distance between the means \citep{takatsu_2011_WassersteinGeometryGaussian}. Hence
\[
\W^2\bigl[\mathcal{N}(\theta,1),\mathcal{N}(\theta',1)\bigr]=(\theta-\theta')^2.
\]
If $\Pi(\cdot\mid Y)$ denotes the posterior law of the induced measures and $\psi(t)=t^2$, then
\[
D_{t^2}(\Pi\mid Y)=E\bigl[(\theta-\theta')^2\mid Y\bigr].
\]
Now $\theta-\theta'$ has conditional mean zero and conditional variance $2\tau_m^2$, because $\theta$ and $\theta'$ are independent posterior draws. Therefore
\[
D_{t^2}(\Pi\mid Y)=2\tau_m^2=2\varop(\theta\mid Y).
\]
So in this familiar Bayesian model the Wasserstein heterogeneity of the induced random measures is nothing more than twice the posterior variance. The corresponding sample estimator is available immediately. If $\theta^{(1)},\ldots,\theta^{(n)}$ are posterior draws, then
\[
U_n=\binom{n}{2}^{-1}\sum_{1\leq i<j\leq n}(\theta^{(i)}-\theta^{(j)})^2=2s_{\theta}^2,
\]
where $s_{\theta}^2$ is the sample variance of the posterior draws.

\smallskip\noindent\textit{A symmetric two-point population.}
Let $\Pi=\tfrac{1}{2}\delta_{\mu_1}+\tfrac{1}{2}\delta_{\mu_2}$, with $\mu_1\neq\mu_2$. When two independent draws are taken from this population, they coincide with probability $1/2$ and differ with probability $1/2$. It follows that
\[
D_{\psi}(\Pi)=\tfrac{1}{2}\psi(0)+\tfrac{1}{2}\psi\bigl[\W(\mu_1,\mu_2)\bigr].
\]
If additionally $\psi(0)=0$, then the heterogeneity is one half of the transformed separation between the two support points, so the population is genuinely heterogeneous whenever $\mu_1$ and $\mu_2$ are distinct.

The first-order projection tells a different story. For either support point $\mu_k$,
\[
E\{h(\mu_k,\mu')\}=\tfrac{1}{2}\psi(0)+\tfrac{1}{2}\psi\bigl[\W(\mu_1,\mu_2)\bigr]=D_{\psi}(\Pi),
\]
where $\mu'\sim\Pi$. Thus $g(\mu_k)=0$ for both $k=1,2$. In words, neither support point is more eccentric than the other as each sees exactly the same expected distance profile. Nevertheless, the population has positive heterogeneity, implying that the heterogeneity is generated entirely by a second-order effect rather than by first-order variation in the projection. This is the simplest example showing why degeneracy can coexist with a non-zero target and why the Gaussian approximation in Theorem~\ref{thm:main} is not always appropriate.

These examples also clarify how \eqref{eq:ghat} should be read in practice. The empirical eccentricities are observation-level summaries of how strongly individual measures contribute to the overall heterogeneity. They therefore play an interpretive role as well as an inferential one.

\section{Numerical illustrations}

\subsection{Synthetic Gaussian measures}
\label{sec:synthetic}

We begin with a setting in which the quadratic Wasserstein distance is available in closed form. Each observation is an axis-aligned Gaussian measure
\[
\mu_i=\mathcal{N}\bigl\{m_i,\mathrm{diag}(s_{i1}^2,s_{i2}^2)\bigr\},
\]
where $m_i\in\R^2$ is a mean vector and $(s_{i1},s_{i2})$ is the vector of marginal standard deviations. For this family,
\[
W_2^2(\mu_i,\mu_j)=\|m_i-m_j\|^2+\|s_i-s_j\|^2,
\]
so all pairwise Wasserstein distances are exact. This example allows to check  both the inferential part of the method and the observation-wise interpretation of the empirical eccentricities.

We generated three groups, each of size 60. Group A is tightly concentrated and has little variation in either location or scale. Group B is more diffuse. Group C is a mixture, with a compact main component and a smaller separated component. Thus the design contains both broad within-group variation and a localized source of atypicality.

\begin{figure}[ht]
\centering
\includegraphics[width=\textwidth]{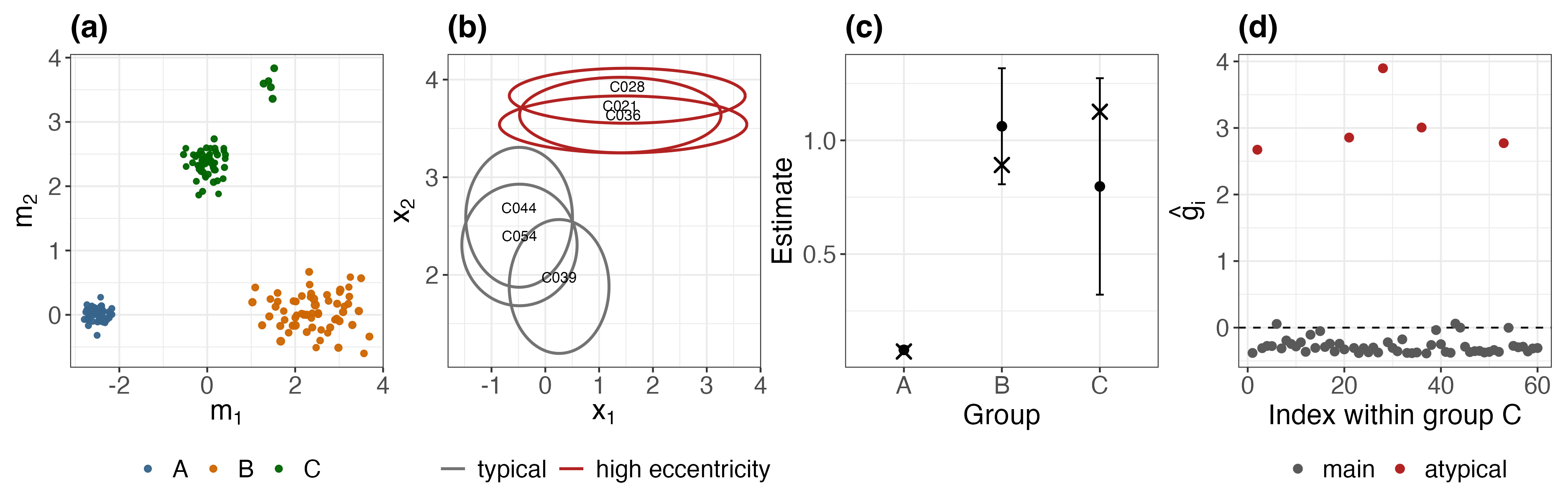}
\caption{Synthetic Gaussian experiment. Panel (a) shows the locations of the generated measures; colours indicate the three groups A, B, and C. Panel (b) shows selected Gaussian measures from group C; grey ellipses denote typical observations and red ellipses denote observations with large empirical eccentricity. Panel (c) shows within-group heterogeneity estimates (points) with 95\% Wald intervals (vertical bars); crosses indicate analytic reference values. Panel (d) shows empirical eccentricities $\widehat g_i$ for observations in group C; red points correspond to the atypical component and dark grey points to the main component.}
\label{fig:synthetic-overview}
\end{figure}

For the squared transform $\psi(t)=t^2$, the corresponding heterogeneity targets are available analytically. Figure~\ref{fig:synthetic-overview} summarizes one simulated data set. Panel (a) shows the intended geometric structure that group A is concentrated, group B is diffuse, and group C contains a separated minor component. Panel (b) shows that the observations with the largest empirical eccentricities lie in that separated component, which is precisely the behaviour expected from \eqref{eq:ghat}. Panel (c) shows that the within-group estimates are close to the analytic targets, with noticeably wider uncertainty for group C, reflecting the extra variability induced by the mixture structure. Panel (d) gives the same message at the observation level: large positive eccentricities are concentrated in the atypical component, whereas the main component lies near zero.

%A Monte Carlo study, reported in the Supplementary Material, confirms the same pattern over repeated data sets: the estimator is nearly unbiased, the Wald intervals are close to nominal, and the two-sample comparison of Proposition~\ref{prop:twosample} behaves as predicted by the theory.

\begin{figure}[ht]
\centering
\includegraphics[width=.7\textwidth]{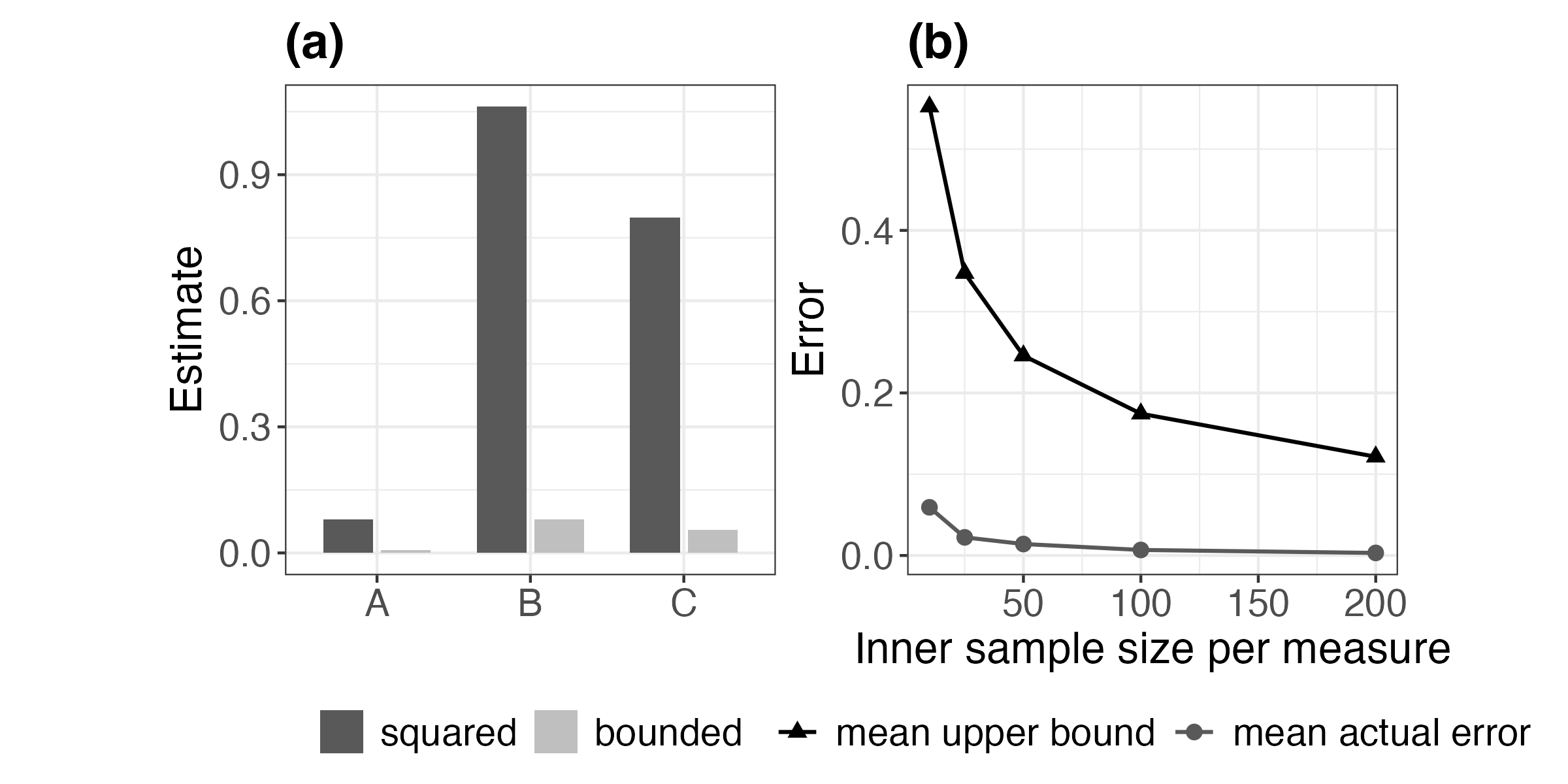}
\caption{Transform choice and plug-in stability in the synthetic experiment.}
\label{fig:synthetic-secondary}
\end{figure}

Figure~\ref{fig:synthetic-secondary} highlights two further features.  Panel (a) compares the squared transform $\psi(t)=t^2$ with the bounded transform $\psi(t)=t^2/(t^2+c_0^2)$, where $c_0$ is the sample median of the pairwise Wasserstein distances $\{W_2(\mu_i,\mu_j):i<j\}$. As expected, the bounded transform compresses the heterogeneity estimates, mostly for the groups with larger pairwise distances. Panel (b) examines Proposition~\ref{prop:plugin} for the Lipschitz transform $\psi(t)=t$. The observed plug-in error decreases as the inner sample size increases and remains below the theoretical upper bound throughout, which is consistent with the stated stability result.

Taken together, Figs.~\ref{fig:synthetic-overview} and \ref{fig:synthetic-secondary} illustrate three  points. The heterogeneity estimator and its standard error behave as expected in a model with known Wasserstein structure. The empirical eccentricities identify the observations that drive the heterogeneity signal. The choice of transform and the use of plug-in approximations have the anticipated robustness and stability effects.

\subsection{Handwritten digits}

We next illustrate the method with the MNIST handwritten digits data \citep{lecun_1998_MNISTDatabaseHandwritten}. For each digit $k\in\{0,\ldots,9\}$, we randomly selected 500 images, where each image is normalized as a probability measure on the pixel grid. Using the precomputed within-digit pairwise distances, we estimated the heterogeneity functional with the squared transform $\psi(t)=t^2$. Standard errors and 95\% Wald intervals were computed from the empirical eccentricities in Section~2.

\begin{table}[t]
\centering
\begin{tabular}{lcccccccccc}
\hline
Digit & 0 & 1 & 2 & 3 & 4 & 5 & 6 & 7 & 8 & 9 \\
\hline
Estimate    & 4.939 & 5.916 & 9.241 & 8.028 & 7.828 & 9.601 & 6.743 & 8.788 & 6.493 & 6.526 \\
SE          & 0.125 & 0.272 & 0.203 & 0.216 & 0.194 & 0.223 & 0.192 & 0.270 & 0.190 & 0.204 \\
95\% lower  & 4.694 & 5.383 & 8.844 & 7.604 & 7.448 & 9.164 & 6.366 & 8.258 & 6.120 & 6.127 \\
95\% upper  & 5.184 & 6.449 & 9.638 & 8.451 & 8.208 & 10.038 & 7.120 & 9.317 & 6.866 & 6.925 \\
Rank        & 10 & 9 & 2 & 4 & 5 & 1 & 6 & 3 & 8 & 7 \\
\hline
\end{tabular}
\caption{Within-digit Wasserstein heterogeneity in MNIST. SE denotes the standard error. Rank 1 corresponds to the largest estimated within-digit heterogeneity.}
\label{tab:mnist-summary}
\end{table}

Table~\ref{tab:mnist-summary} summarizes the results. All ten digits exhibit substantial within-digit heterogeneity, and even the smallest interval lies well above zero. The main question is therefore comparative rather than absolute, namely which digit classes are written in relatively stable ways, which admit several distinct forms, and which individual images deviate most from the rest within each digit.

The largest point estimates are attained by digits 5, 2, and 7, whereas digits 0 and 1 are markedly less heterogeneous. This ordering accords with visual intuition that some digits are constrained by a relatively stable overall shape, while others admit several common constructions that differ in slant, curvature, or the balance between upper and lower strokes. At the same time, the confidence intervals for digits 5, 2, and 7 overlap, so it is more sensible to regard them as a high-heterogeneity group than to insist on a strict ranking among them.

\begin{figure}[ht]
\centering
\includegraphics[width=.75\textwidth]{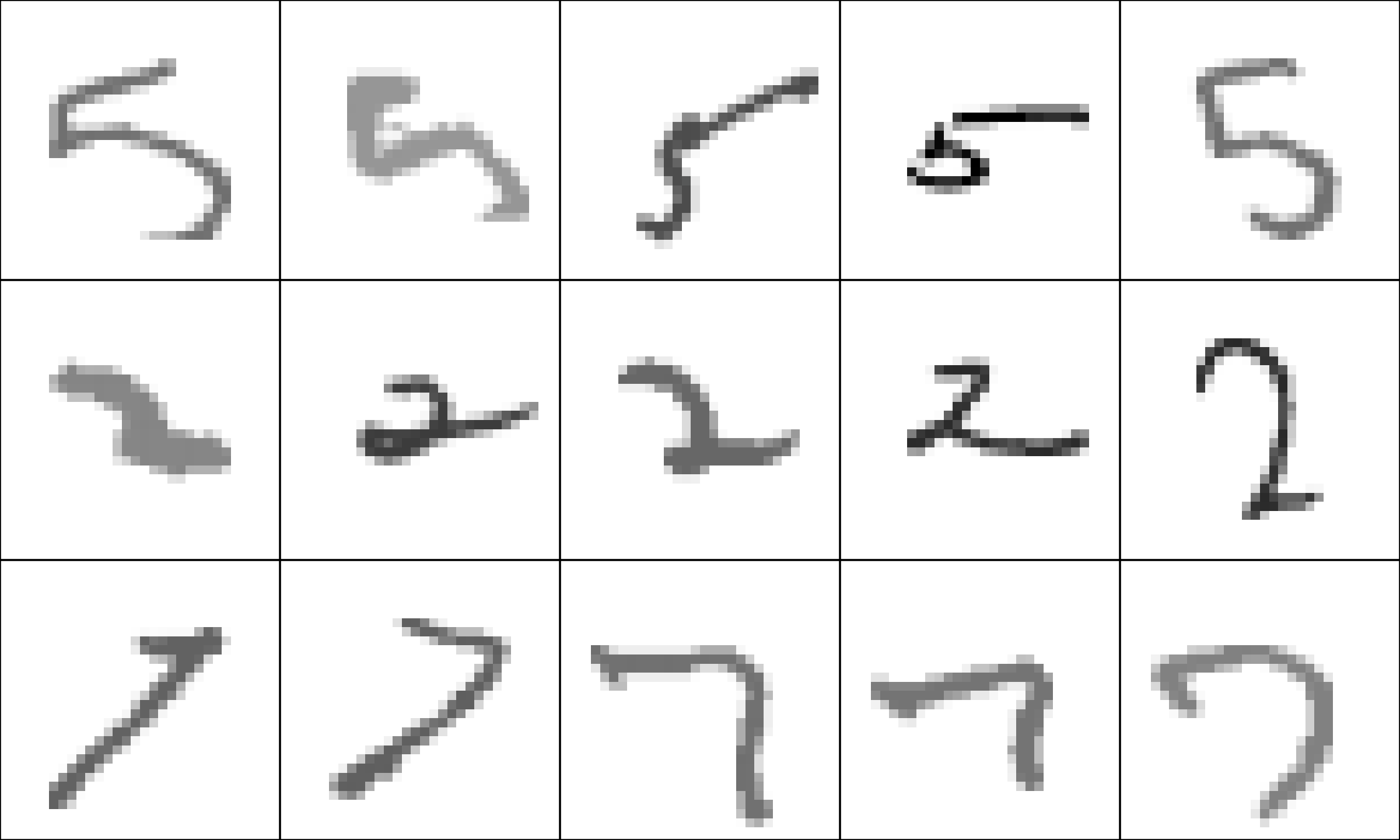}
\caption{Most eccentric images for digits 5 (top row), 2 (middle row), and 7 (bottom row).}
\label{fig:mnist-eccentric}
\end{figure} 

Figure~\ref{fig:mnist-eccentric} displays the most eccentric images from digits 5, 2, and 7, corresponding to the largest values of $\widehat g_i$, which contribute heavily to the corresponding $U$–statistics and illustrate how the empirical eccentricities identify the observations responsible for within-class heterogeneity.

\section{Discussion}

Pairwise transformed Wasserstein distance provides a simple population-level summary of spread for random probability measures. Once the kernel is controlled by Wasserstein moments, standard order-two $U$--statistic theory yields consistency, asymptotic normality, and a consistent standard error. The empirical eccentricities play a dual role to estimate the asymptotic variance and to identify the observations that contribute most strongly to heterogeneity.

The numerical illustrations highlight both aspects of the framework. In the synthetic Gaussian experiment, the estimator and its Wald intervals behave in line with the theory, while the empirical eccentricities recover the separated component responsible for the additional heterogeneity. In the handwritten-digit example, the same quantities rank digit classes by within-class variability and identify the images that make digits such as 5, 2, and 7 especially heterogeneous.

Two issues merit further study. The first is computation, since a full analysis requires all pairwise Wasserstein distances within a sample. The second is degeneracy: highly symmetric populations can have positive heterogeneity but vanishing first-order projection variance, so that the normal approximation in Theorem~\ref{thm:main} no longer applies. Natural next steps are scalable approximations based on incomplete pairwise sampling or faster Wasserstein computation, together with a more systematic study of how the choice of transform $\psi$ trades sensitivity against robustness.

\bibliographystyle{dcu}
\bibliography{references}

%%%%%%%%%%%%%%%%%%%%%%%%%%%%%%%%%%%%%%%%%%%%%%%%%%%%%%%%%%%%%%%%%%%%%%%%%%%%%%%%%
\newpage 
\appendix
%this is appendix
\section*{Proofs of Theoretical Results}

Throughout, write
\[
\theta = D_{\psi}(\Pi),\quad h_{ij}=h(\mu_i,\mu_j)=\psi\{\W(\mu_i,\mu_j)\}.
\]
Under the conditions of Theorem~\ref{thm:main}(ii), define
\[
g_i=g(\mu_i),\quad
\widetilde h_{ij}=h_{ij}-\theta-g_i-g_j.
\]
Then
\[
E(g_1)=0,
\]
and, for $k=1,2$,
\[
E\{\widetilde h_{12}\mid \mu_k\}=0
\]
almost surely.

\vspace{1cm}
\noindent \textbf{Proof of Lemma~\ref{lem:domination}}\\
By the triangle inequality for $\W$,
\[
\W(\mu,\nu)\leq \W(\mu,\mu_0)+\W(\nu,\mu_0).
\]
Since $p\geq 1$, the elementary inequality $(x+y)^p\leq 2^{p-1}(x^p+y^p)$ holds for all $x,y\geq 0$. Therefore
\[
\W(\mu,\nu)^p
\leq 2^{p-1}\bigl\{\W(\mu,\mu_0)^p+\W(\nu,\mu_0)^p\bigr\}.
\]
Using the growth condition on $\psi$,
\[
|h(\mu,\nu)|
=
\bigl|\psi\{\W(\mu,\nu)\}\bigr|
\leq
a+b\,\W(\mu,\nu)^p
\leq
a+b\,2^{p-1}\bigl\{\W(\mu,\mu_0)^p+\W(\nu,\mu_0)^p\bigr\}.
\]
Hence, with $C=a+b\,2^{p-1}$,
\[
|h(\mu,\nu)|
\leq
C\bigl\{1+\W(\mu,\mu_0)^p+\W(\nu,\mu_0)^p\bigr\},
\]
which proves the stated domination bound.

Now suppose that $E\{\W(\mu,\mu_0)^p\}<\infty$. Since $\mu$ and $\nu$ are independent copies,
\[
E|h(\mu,\nu)|
\leq
C\Bigl[1+E\{\W(\mu,\mu_0)^p\}+E\{\W(\nu,\mu_0)^p\}\Bigr]
=
C\Bigl[1+2E\{\W(\mu,\mu_0)^p\}\Bigr]
<\infty.
\]

Next suppose that $E\{\W(\mu,\mu_0)^{2p}\}<\infty$. Squaring the domination bound and using
\[
(x+y+z)^2\leq 3(x^2+y^2+z^2),
\]
we obtain
\[
h(\mu,\nu)^2
\leq
3C^2\bigl\{1+\W(\mu,\mu_0)^{2p}+\W(\nu,\mu_0)^{2p}\bigr\}.
\]
Taking expectations gives
\[
E\{h(\mu,\nu)^2\}
\leq
3C^2\Bigl[1+2E\{\W(\mu,\mu_0)^{2p}\}\Bigr]
<\infty.
\]

It remains to show that the finiteness of the Wasserstein moments does not depend on the choice of $\mu_0$. Let $\mu_1\in\Ptwo$ be fixed, and let $q>0$. Again by the triangle inequality,
\[
\W(\mu,\mu_1)\leq \W(\mu,\mu_0)+\W(\mu_0,\mu_1).
\]
Hence
\[
\W(\mu,\mu_1)^q
\leq
2^{q-1}\bigl\{\W(\mu,\mu_0)^q+\W(\mu_0,\mu_1)^q\bigr\}.
\]
Since $\mu_0$ and $\mu_1$ are fixed elements of $\Ptwo$, the quantity $\W(\mu_0,\mu_1)$ is finite. Therefore
\[
E\{\W(\mu,\mu_1)^q\}
\leq
2^{q-1}E\{\W(\mu,\mu_0)^q\}
+
2^{q-1}\W(\mu_0,\mu_1)^q.
\]
Thus finiteness of $E\{\W(\mu,\mu_0)^q\}$ implies finiteness of $E\{\W(\mu,\mu_1)^q\}$. Interchanging the roles of $\mu_0$ and $\mu_1$ gives the converse implication. Applying this with $q=p$ and $q=2p$ proves the claim.\qed

\vspace{1cm}
\noindent \textbf{Proof of Theorem~\ref{thm:main}}\\
For part (i), Lemma~\ref{lem:domination} implies that
\[
E|h(\mu_1,\mu_2)|<\infty.
\]
The kernel $h$ is measurable and symmetric, because $\W(\mu,\nu)=\W(\nu,\mu)$. Therefore the classical strong law for $U$--statistics of order two applies, and yields
\[
U_n\to \theta=D_{\psi}(\Pi)
\]
almost surely. For more details, see,  for example, \citet{hoeffding_1948_ClassStatisticsAsymptotically} or \citet{serfling_1980_ApproximationTheoremsMathematical}.

We now prove part (ii). By Lemma~\ref{lem:domination},
\[
E(h_{12}^2)<\infty.
\]
Since
\[
\theta+g(\mu_1)=E(h_{12}\mid \mu_1),
\]
conditional Jensen gives
\[
E\bigl[\{\theta+g(\mu_1)\}^2\bigr]
=
E\Bigl[\bigl\{E(h_{12}\mid \mu_1)\bigr\}^2\Bigr]
\leq
E(h_{12}^2)
<\infty.
\]
Hence $E\{g(\mu_1)^2\}<\infty$. It follows that $E(\widetilde h_{12}^2)<\infty$ as well.

By construction,
\[
h_{ij}=\theta+g_i+g_j+\widetilde h_{ij},
\]
so the Hoeffding decomposition is
\[
U_n-\theta
=
\frac{2}{n}\sum_{i=1}^n g_i
+
U_n^{\circ},
\qquad
U_n^{\circ}
=
\binom{n}{2}^{-1}\sum_{1\leq i<j\leq n}\widetilde h_{ij}.
\]
We first show that the degenerate term $U_n^{\circ}$ is negligible on the $n^{1/2}$ scale. Since $E(U_n^{\circ})=0$, it is enough to compute its second moment. Consider two pairs $(i,j)$ and $(k,\ell)$ with $i<j$ and $k<\ell$.

If $\{i,j\}\cap\{k,\ell\}=\varnothing$, then $\widetilde h_{ij}$ and $\widetilde h_{k\ell}$ are independent and centred, so
\[
E(\widetilde h_{ij}\widetilde h_{k\ell})=0.
\]
If the two pairs overlap in exactly one index, say $(i,j)=(1,2)$ and $(k,\ell)=(1,3)$, then
\[
E(\widetilde h_{12}\widetilde h_{13})
=
E\Bigl[E(\widetilde h_{12}\widetilde h_{13}\mid \mu_1)\Bigr]
=
E\Bigl[E(\widetilde h_{12}\mid \mu_1)\,E(\widetilde h_{13}\mid \mu_1)\Bigr]
=
0,
\]
because $E(\widetilde h_{12}\mid \mu_1)=0$ almost surely. The same argument applies to any overlapping pair. Therefore all cross terms vanish, and only the diagonal terms remain:
\[
E\bigl\{(U_n^{\circ})^2\bigr\}
=
\binom{n}{2}^{-2}
\binom{n}{2}
E(\widetilde h_{12}^2)
=
\binom{n}{2}^{-1}E(\widetilde h_{12}^2)
=
\frac{2E(\widetilde h_{12}^2)}{n(n-1)}.
\]
Hence
\[
E\bigl\{n(U_n^{\circ})^2\bigr\}
=
\frac{2E(\widetilde h_{12}^2)}{n-1}\to 0,
\]
so
\[
n^{1/2}U_n^{\circ}\to 0
\]
in $L^2$, and therefore in probability.

Now $(g_i)_{i\geq 1}$ are independent and identically distributed, with mean zero and variance $\operatorname{var}\{g(\mu_1)\}\in(0,\infty)$. The classical central limit theorem gives
\[
\frac{2}{n^{1/2}}\sum_{i=1}^n g_i
\to
\mathcal{N}\bigl[0,4\operatorname{var}\{g(\mu_1)\}\bigr]
\]
in distribution. Combining this with the Hoeffding decomposition and Slutsky's theorem yields
\[
n^{1/2}\{U_n-\theta\}
=
\frac{2}{n^{1/2}}\sum_{i=1}^n g_i
+
n^{1/2}U_n^{\circ}
\to
\mathcal{N}\bigl[0,4\operatorname{var}\{g(\mu_1)\}\bigr]
\]
in distribution, which proves part (ii).\qed

\vspace{1cm}
\noindent \textbf{Proof of Proposition~\ref{prop:variance}}\\
Let
\[
\bar g_n=n^{-1}\sum_{i=1}^n g_i,
\qquad
U_n^{\circ}
=
\binom{n}{2}^{-1}\sum_{1\leq i<j\leq n}\widetilde h_{ij}.
\]
Using the decomposition $h_{ij}=\theta+g_i+g_j+\widetilde h_{ij}$ and the identity
\[
U_n=\theta+\frac{2}{n}\sum_{i=1}^n g_i+U_n^{\circ},
\]
we obtain
\begin{align*}
\widehat g_i
&=
\frac{1}{n-1}\sum_{j\neq i}h_{ij}-U_n\\
&=
\frac{1}{n-1}\sum_{j\neq i}(\theta+g_i+g_j+\widetilde h_{ij})
-
\theta-\frac{2}{n}\sum_{k=1}^n g_k-U_n^{\circ}\\
&=
\frac{n-2}{n-1}(g_i-\bar g_n)
+
r_{i,n},
\end{align*}
where
\[
r_{i,n}
=
a_{i,n}-U_n^{\circ},
\qquad
a_{i,n}
=
\frac{1}{n-1}\sum_{j\neq i}\widetilde h_{ij}.
\]

We first show that the average squared remainder is negligible. Define
\[
A_n=\frac{1}{n-1}\sum_{i=1}^n r_{i,n}^2.
\]
Using $(x-y)^2\leq 2x^2+2y^2$,
\[
A_n
\leq
\frac{2}{n-1}\sum_{i=1}^n a_{i,n}^2
+
\frac{2n}{n-1}(U_n^{\circ})^2.
\]
Fix $i$. Then
\[
E(a_{i,n}^2)
=
\frac{1}{(n-1)^2}
\sum_{j\neq i}\sum_{k\neq i}E(\widetilde h_{ij}\widetilde h_{ik}).
\]
If $j\neq k$, then
\[
E(\widetilde h_{ij}\widetilde h_{ik})
=
E\Bigl[E(\widetilde h_{ij}\widetilde h_{ik}\mid \mu_i)\Bigr]
=
E\Bigl[E(\widetilde h_{ij}\mid \mu_i)\,E(\widetilde h_{ik}\mid \mu_i)\Bigr]
=
0.
\]
Therefore only the diagonal terms remain, and
\[
E(a_{i,n}^2)
=
\frac{1}{(n-1)^2}
\sum_{j\neq i}E(\widetilde h_{ij}^2)
=
\frac{E(\widetilde h_{12}^2)}{n-1}.
\]
From the proof of Theorem~\ref{thm:main},
\[
E\bigl\{(U_n^{\circ})^2\bigr\}
=
\frac{2E(\widetilde h_{12}^2)}{n(n-1)}.
\]
Hence
\[
E(A_n)
\leq
\frac{2n}{(n-1)^2}E(\widetilde h_{12}^2)
+
\frac{4}{(n-1)^2}E(\widetilde h_{12}^2)
\to 0.
\]
Thus
\[
A_n\to 0
\]
in $L^1$, and therefore in probability.

Now let
\[
\alpha_n=\frac{n-2}{n-1},
\qquad
b_{i,n}=\alpha_n(g_i-\bar g_n),
\qquad
B_n=\frac{1}{n-1}\sum_{i=1}^n b_{i,n}^2.
\]
Then
\[
B_n
=
\alpha_n^2\left\{
\frac{1}{n-1}\sum_{i=1}^n g_i^2
-
\frac{n}{n-1}\bar g_n^2
\right\}.
\]
Because $E(g_1)=0$ and $E(g_1^2)<\infty$, the strong law of large numbers gives
\[
\frac{1}{n}\sum_{i=1}^n g_i^2\to E(g_1^2),
\qquad
\bar g_n\to 0
\]
almost surely. Since $\alpha_n\to 1$, it follows that
\[
B_n\to E(g_1^2)=\operatorname{var}(g_1)
\]
almost surely, and hence in probability.

Finally,
\[
\widehat g_i=b_{i,n}+r_{i,n},
\]
so
\begin{align*}
\left|
\frac{1}{n-1}\sum_{i=1}^n \widehat g_i^2 - B_n
\right|
&=
\left|
\frac{1}{n-1}\sum_{i=1}^n (2b_{i,n}r_{i,n}+r_{i,n}^2)
\right|\\
&\leq
\frac{2}{n-1}\sum_{i=1}^n |b_{i,n}r_{i,n}|
+
\frac{1}{n-1}\sum_{i=1}^n r_{i,n}^2\\
&\leq
2B_n^{1/2}A_n^{1/2}
+
A_n,
\end{align*}
where the last step uses Cauchy--Schwarz. Since $B_n=O_p(1)$ and $A_n\to 0$ in probability,
\[
\frac{1}{n-1}\sum_{i=1}^n \widehat g_i^2
\to
\operatorname{var}(g_1)
\]
in probability. Multiplying by 4 gives
\[
\widehat\sigma_n^2
=
\frac{4}{n-1}\sum_{i=1}^n \widehat g_i^2
\to
4\operatorname{var}\{g(\mu_1)\}
\]
in probability.

The studentized central limit theorem now follows from Theorem~\ref{thm:main}(ii), the positivity assumption $\operatorname{var}\{g(\mu_1)\}>0$, and Slutsky's theorem:
\[
\frac{n^{1/2}\{U_n-\theta\}}{\widehat\sigma_n}
\to \mathcal{N}(0,1)
\]
in distribution.\qed

\vspace{1cm}
\noindent \textbf{Proof of Proposition~\ref{prop:twosample}}\\
Write
\[
\theta_A=D_{\psi}(\Pi_A),\quad
\theta_B=D_{\psi}(\Pi_B),\quad
\Delta=\theta_A-\theta_B,
\]
and let
\[
\sigma_A^2=4\operatorname{var}\{g_A(\mu_{A1})\},
\qquad
\sigma_B^2=4\operatorname{var}\{g_B(\mu_{B1})\},
\]
where $g_A$ and $g_B$ are the Hoeffding projections in the two groups.

By Theorem~\ref{thm:main}(i), applied separately to the two independent samples,
\[
U_A\to \theta_A,
\qquad
U_B\to \theta_B
\]
almost surely. Therefore
\[
U_A-U_B\to \Delta
\]
almost surely.

For the asymptotic normality, define
\[
Z_{A,n}=\frac{U_A-\theta_A}{\sigma_A/\sqrt{n_A}},
\qquad
Z_{B,n}=\frac{U_B-\theta_B}{\sigma_B/\sqrt{n_B}}.
\]
By Theorem~\ref{thm:main}(ii),
\[
Z_{A,n}\to \mathcal{N}(0,1),
\qquad
Z_{B,n}\to \mathcal{N}(0,1)
\]
in distribution. Because the two samples are independent, $Z_{A,n}$ and $Z_{B,n}$ are independent for each $n$. Hence, for any $s,t\in\R$,
\[
E\exp\{i(sZ_{A,n}+tZ_{B,n})\}
=
E(e^{isZ_{A,n}})\,E(e^{itZ_{B,n}})
\to
\exp\{-(s^2+t^2)/2\}.
\]
Thus
\[
(Z_{A,n},Z_{B,n})\to \mathcal{N}_2(0,I_2)
\]
in distribution.

Now let
\[
d_n^2=\frac{\sigma_A^2}{n_A}+\frac{\sigma_B^2}{n_B},
\qquad
c_{A,n}=\frac{\sigma_A/\sqrt{n_A}}{d_n},
\qquad
c_{B,n}=\frac{\sigma_B/\sqrt{n_B}}{d_n}.
\]
Then $c_{A,n}^2+c_{B,n}^2=1$, and
\[
T_n^{\circ}
:=
\frac{(U_A-U_B)-\Delta}{d_n}
=
c_{A,n}Z_{A,n}-c_{B,n}Z_{B,n}.
\]
The sequence $(c_{A,n},c_{B,n})$ lies in the compact unit circle, so every subsequence has a further subsequence along which
\[
(c_{A,n},c_{B,n})\to (c_A,c_B)
\]
with $c_A^2+c_B^2=1$. Along that further subsequence, the continuous mapping theorem gives
\[
T_n^{\circ}\to c_A Z_A-c_B Z_B,
\]
where $(Z_A,Z_B)\sim \mathcal{N}_2(0,I_2)$. Since $c_A^2+c_B^2=1$, the limiting distribution is $\mathcal{N}(0,1)$. Every subsequence therefore has a further subsequence converging to $\mathcal{N}(0,1)$, and it follows that
\[
T_n^{\circ}\to \mathcal{N}(0,1)
\]
in distribution.

It remains to replace $d_n$ by the estimated denominator
\[
\widehat d_n^2=\frac{\widehat\sigma_A^2}{n_A}+\frac{\widehat\sigma_B^2}{n_B}.
\]
By Proposition~\ref{prop:variance},
\[
\frac{\widehat\sigma_A^2}{\sigma_A^2}\to 1,
\qquad
\frac{\widehat\sigma_B^2}{\sigma_B^2}\to 1
\]
in probability. Also,
\[
\frac{\widehat d_n^2}{d_n^2}
=
w_{A,n}\frac{\widehat\sigma_A^2}{\sigma_A^2}
+
w_{B,n}\frac{\widehat\sigma_B^2}{\sigma_B^2},
\]
where
\[
w_{A,n}
=
\frac{\sigma_A^2/n_A}{\sigma_A^2/n_A+\sigma_B^2/n_B},
\qquad
w_{B,n}=1-w_{A,n}.
\]
Since $0\leq w_{A,n},w_{B,n}\leq 1$, the right-hand side is a convex combination of two quantities converging in probability to 1, so
\[
\frac{\widehat d_n^2}{d_n^2}\to 1
\]
in probability. By the continuous mapping theorem,
\[
\frac{\widehat d_n}{d_n}\to 1
\]
in probability. Slutsky's theorem now yields
\[
\frac{(U_A-U_B)-\Delta}
{(\widehat\sigma_A^2/n_A+\widehat\sigma_B^2/n_B)^{1/2}}
\to \mathcal{N}(0,1)
\]
in distribution.\qed

\vspace{1cm}
\noindent \textbf{Proof of Proposition~\ref{prop:plugin}}\\
Write
\[
\widehat h_{ij}
=
\psi\bigl\{\W(\widehat\mu_i,\widehat\mu_j)\bigr\},
\qquad
h_{ij}
=
\psi\bigl\{\W(\mu_i,\mu_j)\bigr\}.
\]
Then
\[
\widehat U_n-U_n
=
\binom{n}{2}^{-1}\sum_{1\leq i<j\leq n}(\widehat h_{ij}-h_{ij}).
\]
Hence
\[
|\widehat U_n-U_n|
\leq
\binom{n}{2}^{-1}\sum_{1\leq i<j\leq n}|\widehat h_{ij}-h_{ij}|.
\]
Since $\psi$ is $L$-Lipschitz,
\[
|\widehat h_{ij}-h_{ij}|
\leq
L\left|
\W(\widehat\mu_i,\widehat\mu_j)-\W(\mu_i,\mu_j)
\right|.
\]
Now use the reverse triangle inequality for the metric $\W$:
\[
\left|
\W(\widehat\mu_i,\widehat\mu_j)-\W(\mu_i,\mu_j)
\right|
\leq
\W(\widehat\mu_i,\mu_i)+\W(\widehat\mu_j,\mu_j).
\]
Therefore
\[
|\widehat U_n-U_n|
\leq
L\binom{n}{2}^{-1}\sum_{1\leq i<j\leq n}
\Bigl\{\W(\widehat\mu_i,\mu_i)+\W(\widehat\mu_j,\mu_j)\Bigr\}.
\]
Each term $\W(\widehat\mu_i,\mu_i)$ appears exactly $n-1$ times in the double sum, so
\[
\sum_{1\leq i<j\leq n}
\Bigl\{\W(\widehat\mu_i,\mu_i)+\W(\widehat\mu_j,\mu_j)\Bigr\}
=
(n-1)\sum_{i=1}^n \W(\widehat\mu_i,\mu_i).
\]
Since $\binom{n}{2}=n(n-1)/2$, it follows that
\[
|\widehat U_n-U_n|
\leq
\frac{2L}{n}\sum_{i=1}^n \W(\widehat\mu_i,\mu_i),
\]
which is the claimed bound.

\end{document}